\newtheorem{definition}{Definition} 
\newtheorem{theorem}{Theorem}
\newtheorem{lemma}{Lemma}
\newtheorem{corollary}{Corollary}
\newcommand{\descr}[1]{\noindent \textbf{#1}}
\title{Improved Streaming Algorithm for Fair $k$-Center Clustering}
\author{
    Longkun Guo\textsuperscript{{\rm 1,2}}\thanks{Corresponding Author},
    Zeyu Lin\textsuperscript{\rm 1},
    Chaoqi Jia\textsuperscript{\rm 3},
    Chao Chen\textsuperscript{\rm 3}
}
\begin{document}
\maketitle

\begin{abstract}
Many real-world applications call for incorporating fairness constraints into the $k$-center clustering problem, where the dataset is partitioned into $m$ demographic groups, each with a specified upper bound on the number of centers to ensure fairness. Focusing on big data scenarios, this paper addresses the problem in a streaming setting, where data points arrive sequentially in a continuous stream. Leveraging a structure called the $\lambda$-independent center set, we propose a one-pass streaming algorithm that first computes a reserved set of points during the streaming process.  In the post-streaming process, we then select centers from the reserved point set by analyzing three possible cases and transforming the most complex one into a specially constrained vertex-cover problem on an auxiliary graph. Our algorithm achieves an approximation ratio of $5+\epsilon$ and memory complexity $O(k\log \alpha)$, where $\alpha$ is the aspect ratio and $\epsilon>0$ is any small constant. Furthermore, we extend our approach to semi-structured data streams, where data points arrive in groups. In this setting, we present a $(3+\epsilon)$-approximation algorithm for $m = 2$, which can be readily adapted to solve the offline fair $k$-center problem, achieving an approximation ratio of 3 that matches the current state of the art. Lastly, we conduct extensive experiments to evaluate the performance of our approaches, demonstrating that they outperform existing baselines in both clustering cost and runtime efficiency. 
\end{abstract}

\begin{links}
    \link{Code}{https://github.com/ChaoqiJia/StreamFKC}
\end{links}

\section{Introduction}
Fair $k$-center clustering is a popular problem in various fields, including data summarization~\cite{kleindessner2019fair,angelidakis2022fair} and machine learning~\cite{chierichetti2017fair,jones2020fair}. The problem has been widely studied, with many definitions of fairness proposed and corresponding approximation algorithms discussed, such as group fairness~\cite{chierichetti2017fair,wu2024new,backurs2019scalable}, data summarization fairness~\cite{kleindessner2019fair,ceccarello2024fast,wu2024new,lin2024streaming}, colorful fairness~\cite{bandyapadhyay2019constant,anegg2022techniques,jia2022fair} and so on. In this paper, we focus on the fair $k$-center problem for data summarization. We consider a dataset $S$ of size $n$ divided into $m$ disjoint groups, denoted as $S = S_{1} \cup \ldots \cup S_{m}$. Our goal is to select $k$ centers to minimize the maximum distance from each point to its nearest center, while the number of centers chosen from each group $S_l$ is bounded by $k_l$. This formulation helps avoid bias toward some sensitive features by controlling the number of objects from each category in the output. For example, it can dictate the number of movies of each genre shown to a user in a recommendation system or limit the number of old messages included in a summary of a user's feed~\cite{mahabadi2024core}. 

A traditional challenge in clustering is the need to process large-scale datasets across numerous applications. In such scenarios, storing the entire input in memory becomes impractical, giving rise to the streaming model. In this model, data points arrive sequentially in a stream, and only a limited portion can be retained in memory due to space constraints. The streaming algorithm must decide, upon the arrival of each data point, whether to store it or discard it. Notably, designing effective streaming algorithms is more challenging than developing offline algorithms, as decisions must be made based on partial information rather than having access to the entire dataset. Motivated by this challenge,  we aim to propose approximation algorithms for the fair $k$-center problem under the streaming setting. 

\subsection{Related Work} 
Fairness is an important concept for many research fields such as data summarization~\cite{kleindessner2019fair}, private security~\cite{0012YP24}, and machine learning~\cite{ZhangWYPTP24,0012WYY0Y25}. In this paper, we briefly review the literature on the $k$-center problem, focusing on studies about data summarization fairness and streaming model settings.

\paragraph{Data summarization fairness $k$-center.} 

The fair $k$-center clustering was formally addressed in the context of data summarization by~\citet{kleindessner2019fair}. They proposed a $5$-approximation algorithm for the case of two groups and a $(3 \times 2^{m-1} - 1)$-approximation for the general case with $m$ groups. Subsequently, \citet{jones2020fair} reduced the approximation ratio to $3$ for arbitrary $m$ groups. Before these two relevant works, the matroid center problem, a generalization of the $k$-center problem that enforces a matroid constraint on the center set rather than a simple cardinality constraint, was studied by~\citet{chen2016matroid}. They developed a $3$-approximation algorithm, although with considerably higher time complexity. More recently, \citet{chen2024approximation} introduced the fair $k$-supplier problem, which involves selecting $k$ facilities from a dataset partitioned into $m$ disjoint groups, subject to group-wise upper bounds on the number of facilities selected. They presented a $5$-approximation algorithm for this problem by maximum matching techniques.

\paragraph{Computation of $k$-center for Big Data.} For large-scale datasets in real-time streaming scenarios, streaming $k$-center clustering was studied by~\citet{matthew2008streaming}, who provided a streaming $(4+\epsilon)$-approximation algorithm with outliers, where up to $z$ input points can be dropped and $O(kz/\epsilon)$ memory is used. Then, a deterministic one-pass streaming algorithm was developed for the same problem by~\citet{ceccarello2019solving}, achieving an improved approximation ratio $(3+\epsilon)$ with $O\big((k+z)(96/\epsilon)^{d}\big)$ memory in $d$ dimensions. For the matroid center problem, a $(17+\epsilon)$-approximation one-pass algorithm with a running time $O_{\epsilon}\big((nk+k^{3.5})+k^{2}\log(\Lambda)\big)$ has been developed by \citet{kale2019small}, where $k$ is the rank of the matroid, $\Lambda$ is the aspect ratio of the metric, and $\epsilon$ terms are hidden by the $O_{\epsilon}$ notation. For the fair $k$-center problem, \citet{chiplunkar2020solve} provided a distributed algorithm achieving an approximation ratio of $(17+\epsilon)$ and with $O(kn/l+mk^{2}l)$ running time for $l$ processors. They also developed a two-pass streaming algorithm with an approximation ratio $3$ for the fair $k$-center problem. Recently, the fair range $k$-center was addressed by~\citet{nguyen2022fair} with an approximation ratio of $13$ and \citet{ceccarello2024fast} employed a coreset-based method for the colorful fair $k$-center problem in doubling metrics, achieving a $(3+\epsilon)$-approximation ratio for sequential, streaming and MapReduce/MPC algorithms. In addition, a one-pass streaming algorithm for representative fair $k$-center clustering by \citet{lin2024streaming} achieved the state-of-the-art approximation ratio of $(7+\epsilon)$. This paper improves the approximation ratio to $(5+\epsilon)$ under a streaming setting by utilizing $\lambda$-independent center sets and modifying the post-streaming algorithms.

\subsection{Our Contributions}
The main contributions can be summarized as follows:
\begin{itemize}
   \item Devise a one-pass streaming algorithm for fair $k$-center, achieving an improved approximation ratio of $5+\epsilon$ while consuming $O(k\log \alpha)$ memory, where $\alpha$ is the aspect ratio and $\epsilon>0$ is any small constant. This significantly improves the previous SOTA ratio $7+\epsilon$ due to \citet{lin2024streaming}. 

   \item For semi-structured data streams where data points of each group $S_l$ are streamed as a batch, we propose a streaming algorithm achieving an approximation ratio of $3+\epsilon$  for $m=2$.

  \item   Conduct extensive experiments on real-world datasets to demonstrate the practical performance gains of our algorithms in clustering accuracy and runtime efficiency. 
\end{itemize}

Notably, we also provide an example demonstrating that our ratio analysis is tight for the algorithm. Moreover, the streaming algorithm with ratio $3+\epsilon$ can be adapted to solve the offline fair $k$-center, achieving a ratio of $3$, which matches the state-of-the-art ratio 3 for the offline setting due to~\citet{jones2020fair}.

\section{Preliminary and Algorithmic Framework}
\label{sec:prob_statement}
Let $S$ be a finite set of data points with size $n$ distributed in a metric space, where $d: S\times S\rightarrow\mathbb{R}_{\geq 0}$ is the distance on $S$ that satisfies \textit{the triangle inequality}. For a given parameter $k\in\mathbb{N}$, the traditional $k$-center clustering problem is to select a set of $k$ points $C\subseteq S$  to serve all points in $S$ such that $\max_{s\in S}d\left(s,C\right)$ is minimized, where $d\left(s,C\right)=\min_{c\in C}d\left(s,c\right)$ is the distance between a point $s$ and the center set $C$.
Let the data set be divided into $m$ \emph{disjoint} groups $S=S_{1}\cup\ldots\cup S_{m}$. There exists a fairness constraint that the number of chosen centers from group $S_{l}$ is bounded by  $k_l$, where $k_{l}$ is the upper  bound for  group $l$ for all $l \in [m]$  and $[m]$  denotes $\{1,2,\dots,m\}$ for brevity. Moreover,  $\sum^m_{l=1} k_l=k$.
Then,  the fair $k$-center clustering problem is to find a center set $C$ satisfying the formulation as follows: 
\begin{align*}
\min_{C \subseteq S} \quad & \max_{s \in S} \; d(s, C) \\
\text{s.t.} \quad 
& \left| C \cap S_l \right| \le k_l, \quad \forall l, \\
& \left| C \right| \le k.
\end{align*}
\subsection{Algorithmic Framework}
Next, we introduce the general framework of our algorithms. Throughout this paper, we assume that the optimal radius of the fair $k$-center problem is known as $r^*$ (i.e., $r^{*}= \max_{s\in S}d\left(s,C^*\right)$ for the optimum center set $C^*$). Although the exact value of $r^*$ is actually unknown, we can employ the previous approach for finding suitable replacement of $r^*$ in the offline setting due to \citet{guo2025near} by compromising the ratio with a multiplicative factor of $1+\epsilon$. 
In general, all of our algorithms mainly proceed in two stages.

\begin{itemize}

\item  \textbf{Streaming stage}. Select a set of representative data points along the stream according to the optimum radius $r^*$, where the union of the selected point sets might have size larger than $k$; 

\item \textbf{Post-streaming stage}. Compute the set of actual centers from the set of streamed and stored representative data points according to $r^*$.

\end{itemize}

\subsection{The Streaming Stage} 

For the streaming stage, we employ the $\lambda$-independent center set as defined below: 

\begin{definition}
\label{def:1}($\lambda$-independent center set) $\Gamma\subseteq S$ is a $\lambda$-independent center set of $S$ if and only if it satisfies the following two conditions:
\begin{itemize}
    \item[ 1)] For  any two points   $p,q\in \Gamma$, the distance between them is larger than $\lambda$, i.e. $d(p,q)>\lambda$.
    \item[ 2)] For any point $p\in S$, there exists a point  $q\in \Gamma$, such that $d(p,q)\leq \lambda$.
\end{itemize}
\end{definition}

We say a $\lambda$-independent center set  $\Gamma\subseteq S$ is \textit{minimal} iff removal of any point from  $\Gamma$ makes $\Gamma$ no longer a $\lambda$-independent center set.
Assume $C^{*}$ is the center set of an optimal solution and recall that $r^{*}$ is the optimal radius. Then we have:

\begin{lemma}
\label{lem:d-independent center set}
   For a minimal $\lambda$-independent center set $\Gamma\subseteq S$, if $\lambda \geq 2r^{*}$, then $|\Gamma|\leq k$. 
\end{lemma}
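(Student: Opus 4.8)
The plan is to exhibit an injection from $\Gamma$ into the optimum center set $C^*$, which immediately yields $|\Gamma| \le |C^*| \le k$ since the optimal solution uses at most $k$ centers. The only structural facts I will need are condition 1) of \Cref{def:1} (pairwise separation of the points in $\Gamma$) together with the covering guarantee of the optimum: because $r^*$ is the optimal radius, every point $s \in S$---and in particular every point of $\Gamma$, since $\Gamma \subseteq S$---lies within distance $r^*$ of some center in $C^*$.

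First I would define the assignment $\phi\colon \Gamma \to C^*$ by sending each $p \in \Gamma$ to a nearest optimum center, i.e. a center $\phi(p) \in C^*$ with $d(p,\phi(p)) = d(p,C^*) \le r^*$ (breaking ties arbitrarily). The heart of the argument is to show that $\phi$ is injective. Suppose toward a contradiction that two distinct points $p,q \in \Gamma$ are mapped to the same center $c = \phi(p) = \phi(q)$. Then $d(p,c) \le r^*$ and $d(q,c) \le r^*$, so the triangle inequality gives
\[
d(p,q) \le d(p,c) + d(c,q) \le 2r^* \le \lambda,
\]
where the last inequality uses the hypothesis $\lambda \ge 2r^*$. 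This contradicts condition 1), which forces $d(p,q) > \lambda$ for any two distinct points of $\Gamma$. Hence no two points of $\Gamma$ share an image, so $\phi$ is injective and therefore $|\Gamma| \le |C^*| \le k$.

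I do not expect a genuine obstacle here: the proof rests entirely on a single application of the triangle inequality followed by a pigeonhole/injectivity count. In fact, I would not anticipate needing the \emph{minimality} of $\Gamma$ for this particular bound---condition 1) alone drives the separation argument, while condition 2) and minimality are presumably reserved for the later covering and memory claims. The only point demanding care is to state the covering property of $C^*$ correctly for points of $\Gamma$ (which they inherit merely by being elements of $S$) and to keep the chain $d(p,q) \le 2r^* \le \lambda < d(p,q)$ aimed at a clean contradiction.
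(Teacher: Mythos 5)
Your proof is correct and is essentially the same argument as the paper's: the paper also pairs two points of $\Gamma$ that share an optimal center (via pigeonhole rather than an explicit injection) and derives the same triangle-inequality contradiction with condition 1) of the definition, likewise without using minimality. The only cosmetic difference is that you handle general $\lambda \ge 2r^*$ directly through the chain $d(p,q)\le 2r^*\le\lambda$, whereas the paper reduces to the case $\lambda = 2r^*$ first.
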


\proof{
We only need  to show the case for $\lambda=2r^*$. Suppose $|\Gamma| > k$. Then, by the Pigeonhole principle, there must exist at least a pair of points $i$, $j\in \Gamma$, which are covered by the same center in the optimal solution, say $c^*$ in $C^*$. Then, both $d(i,c^*)\leq r^*$ and $d(c^*,j)\leq r^*$ hold.  By the triangle inequality, we then have $d(i,j)\leq d(i, c^*)+d(c^*,j)\leq r^*+r^*=2r^*$. On the other hand, we have $d(i,j)>2r^*$ according to Cond.~(1) in Def.~\ref{def:1}, resulting in a contradiction and hence completing the proof. 
\qed
}

So for the streaming stage, we only need to construct a \textit{minimal} $\lambda$-independent center set $\Gamma_{l}$ for each $l\in \{1, 2\}$  regarding $\lambda=2r^*$ along the stream:
\begin{mdframed}[linewidth=0.6pt]
\begin{itemize}
    \item[]
\textbf{Upon} \textit{each arriving point} $i$ \textbf{do} 
 \item[]\quad\quad \textbf{if}  $i\in S_l$ and $d\left(i,\Gamma_{l}\right)> \lambda$ both hold \textbf{then} 
 \item[]\quad\quad\quad\quad Grow $\Gamma_l$ (initially empty) by adding  $i$. 
\end{itemize}
\end{mdframed}

\section{Post-streaming Stage of Fair $k$-Center}
\label{sec:5-approximation}

Observing that the size of  $\Gamma_{1}\cup \Gamma_{2}$ obtained in the streaming stage may exceed $k_l$ for $\lambda=2r^*$, we devise a method to construct the actual center set $C$ by selecting points from the union of $\Gamma_{l}$ for $m=2$ based on dealing with the cases whether $\mid\Gamma_{l}\mid \leq k_l$ holds or not. We extend the method to general $m$ in the full version. 

\subsection{Overview of Post-streaming Stage}

Our algorithm for this stage is grounded in analyzing the sizes of $\Gamma_{1}$ and $\Gamma_{2}$. There are exactly three possible cases regarding the sizes of  $\Gamma_{1}$ and $\Gamma_{2}$:
 \begin{itemize}[leftmargin=20pt]
     \item[(1)] $|\Gamma_{1}|\leq k_{1},|\Gamma_{2}|\leq k_{2}$;
     \item[(2)] $|\Gamma_{1}|\leq k_{1},|\Gamma_{2}|>k_{2}$ or $|\Gamma_{1}|> k_{1},|\Gamma_{2}|\leq k_{2}$;  
     \item[ (3)] $|\Gamma_{1}|>k_{1},|\Gamma_{2}|>k_{2}$. 
  \end{itemize}

{For Case~(1)}, we can directly use $C=\Gamma_{1}\cup\Gamma_{2}$ as the desired center set, because:
(1) $|C|= \vert \Gamma_{1}\cup\Gamma_{2} \vert \le k_1+k_2 = k$; (2) for any point $s\in S$, $d(s,C)\leq 2r^*$ holds due to the definition of $\lambda$-independent center set and $\lambda=2r^*$.

Then, we give a method to solve Case (2) and show that Case (3) can be reduced to Case (2).

\subsection{Processing of Case~(2)}

Without loss of generality, we assume that $|\Gamma_{l}|>k_{l}$ and $|\Gamma_{3-l}|\leq k_{3-l}$ for $l\in \{1,2\}$. 
Then, our algorithm simply proceeds as follows: 
\begin{itemize}
    \item [ ] \textit{Construct $\Gamma_{l}'= \{i \vert i\in\Gamma_{l}, d\left(i,\Gamma_{3-l} \right)>3r^{*}\}$ and \\
    return $C=\Gamma_{3-l}\cup \Gamma_{l}'$ as the desired center set.}
\end{itemize}
  The correctness of such $C$ can be derived from the following theorem:
\begin{theorem}\label{thm:case2}
    For $C=\Gamma_{l}'\cup \Gamma_{3-l}$, we have: (1) $|\Gamma_{l}'|\leq k_{l}$, $|\Gamma_{3-l}|\leq k_{3-l}$, $|C|\leq k$; (2) for $\forall s\in S$, $d(s,C)\leq 5r^*$. 
\end{theorem}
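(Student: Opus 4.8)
The plan is to prove the two parts separately, with the cardinality bound~(1) being the substantive part and the radius bound~(2) following from a routine triangle-inequality case analysis. The constraint $|\Gamma_{3-l}| \leq k_{3-l}$ holds by the Case~(2) hypothesis, and since $\Gamma_l' \subseteq S_l$ and $\Gamma_{3-l} \subseteq S_{3-l}$ lie in distinct groups, we have $C \cap S_l = \Gamma_l'$ and $C \cap S_{3-l} = \Gamma_{3-l}$, so $|C| = |\Gamma_l'| + |\Gamma_{3-l}|$ and the per-group constraints reduce to the single inequality $|\Gamma_l'| \leq k_l$. I would establish this by charging each point of $\Gamma_l'$ to a distinct optimal center that is forced to lie in $S_l$.

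The charging argument has two ingredients. First, since $\Gamma_l' \subseteq \Gamma_l$ and $\Gamma_l$ is $2r^*$-independent, any two distinct $i,j \in \Gamma_l'$ satisfy $d(i,j) > 2r^*$, so by the same Pigeonhole/triangle-inequality reasoning as in \Cref{lem:d-independent center set} they cannot be covered by a common optimal center. Second — and this is the crux — I claim the optimal center covering each $i \in \Gamma_l'$ lies in $S_l$ rather than in $S_{3-l}$. This is exactly where the threshold $3r^*$ in the definition of $\Gamma_l'$ is calibrated: suppose some $i \in \Gamma_l'$ were covered by an optimal center $c^* \in S_{3-l}$, so $d(i,c^*) \leq r^*$. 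Because $\Gamma_{3-l}$ is a $2r^*$-independent center set of $S_{3-l}$, Cond.~(2) of \Cref{def:1} supplies a point $q \in \Gamma_{3-l}$ with $d(c^*,q) \leq 2r^*$, whence $d(i,q) \leq d(i,c^*) + d(c^*,q) \leq 3r^*$, contradicting $d(i,\Gamma_{3-l}) > 3r^*$. Hence every point of $\Gamma_l'$ is covered by a distinct optimal center in $S_l$, giving $|\Gamma_l'| \leq |C^* \cap S_l| \leq k_l$, and summing yields $|C| \leq k$.

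For part~(2) I would split on the group membership of an arbitrary $s \in S$. If $s \in S_{3-l}$, then Cond.~(2) applied to $\Gamma_{3-l} \subseteq C$ gives $d(s,C) \leq 2r^*$ immediately. If $s \in S_l$, Cond.~(2) applied to $\Gamma_l$ yields a witness $p \in \Gamma_l$ with $d(s,p) \leq 2r^*$; when $p \in \Gamma_l' \subseteq C$ we are already done with $d(s,C) \leq 2r^*$, and otherwise $p \in \Gamma_l \setminus \Gamma_l'$, which by the definition of $\Gamma_l'$ means $d(p,\Gamma_{3-l}) \leq 3r^*$, so taking the witness $q \in \Gamma_{3-l} \subseteq C$ gives $d(s,C) \leq d(s,p) + d(p,q) \leq 2r^* + 3r^* = 5r^*$.

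I expect the main obstacle to be the second ingredient of the charging argument in part~(1): arguing that the optimal centers covering $\Gamma_l'$ all belong to group $l$, since fairness is per-group and a point of $S_l$ could a priori be served by a center of $S_{3-l}$. The $3r^*$ threshold is what rules this out, and once it is in place the remaining bounds are bookkeeping. The radius analysis in part~(2) is comparatively routine, the only care needed being to track which group each auxiliary point lives in so that the argument stays consistent with the fairness accounting of part~(1).
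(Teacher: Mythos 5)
Your proposal is correct and follows essentially the same route as the paper's proof: the same $3r^*$-threshold contradiction (via Cond.~(2) of \Cref{def:1} applied to $\Gamma_{3-l}$) showing that points of $\Gamma_l'$ cannot be served by optimal centers in $S_{3-l}$, the same pigeonhole argument using $2r^*$-independence to bound $|\Gamma_l'|$ by $k_l$, and the same two-case triangle-inequality analysis for the $5r^*$ radius bound. The only cosmetic difference is that the paper states the intermediate claim as $d(i,j) > r^*$ for \emph{every} point $j \in S_{3-l}$ rather than only for optimal centers, which is the same argument phrased slightly more generally.
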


\begin{proof}
To prove $|\Gamma_{l}'|\leq k_{l}$ of Cond.~(1), we first show that each  $i\in\Gamma_{l}'$ cannot be covered by any point of $S_{3-l}$ in the optimal clustering,  i.e., $d(i, j)>r^{*}$ holds for any point $j\in S_{3-l}$.  Suppose this is not true, i.e., for a point $i\in\Gamma_{l}'$,  there exists a point $j$ in $S_{3-l}$ with $d(i,j)\leq r^*$. According to the definition of the $2r^{*}$-independent center set, there must exist a point $p\in\Gamma_{3-l}$ with $d(j,p)\leq2r^{*}$. So we get $d(i,p)\leq d(i,j)+d(j,p)\leq r^{*}+2r^{*}=3r^{*}$, where the second inequality is by the triangle inequality.  This contradicts the definition of $\Gamma_{l}'$ that $d\left(i,\Gamma_{3-l}\right)>3r^{*}$ holds for each $i\in \Gamma_{l}'$.    Then, $|\Gamma_{l}'|\leq k_{l}$ must hold, because otherwise there must exist at least two points $p,q\in \Gamma_{l}'$  belonging to an identical cluster in the optimal solution. That means $d(p,q)\leq 2r^*$, contradicting the fact $d(p,q)>2r^*$ as $p,q$ are two points of the $2r^*$-independent center set $\Gamma_l$.

Next, we show $d(s,C)\leq 5r^*$ for any $s\in S$. Firstly,  following  the algorithm,  there exists a point $p\in \Gamma_{l}$ for any $s\in S_{l}$ such that $d(s,p)\leq 2r^{*}$. Then if point $p\in \Gamma_{l}'$, $d(s,p)\leq 2r^{*}$ is true; otherwise, i.e. $p\notin \Gamma_{l}'$, then there exists a point $q\in \Gamma_{3-l}$ with $d(p,q)\leq 3r^{*}$, indicating $d(s,q)\leq d(s,p)+d(p,q) \leq 2r^{*}+3r^{*}=5r^{*}$. Secondly, for each point $s\in S_{3-l}$, there exists a point $p\in \Gamma_{3-l}$ such that $d(s,p)\leq 2r^{*}$. Moreover, $\Gamma_{3-l}$ remains unchanged during the algorithm, so $d(s,p)\leq 2r^{*}$ remains true for $p\in \Gamma_{3-l}$. Hence, for each $s\in S=S_{1}\cup S_{2}$, $d(s,C)\leq 5r^*$ holds.
\end{proof}

\subsection{Processing of Case~(3)} 

 We propose a more sophisticated algorithm to ensure the approximation ratio of $5$ for Case~(3).
The key idea of our algorithm is to construct an auxiliary bipartite graph $G(\Gamma_1\cup \Gamma_2, E)$, such that the center selection problem regarding $\Gamma_1\cup \Gamma_2$ is transformed into a special constrained vertex cover problem in $G$.
The construction of $E(G)$ simply proceeds as:
\begin{itemize}
    \item [] \textit{For  every pair of points $p\in \Gamma_1$ and $q\in \Gamma_2$,  add an\\ edge  $e(p,q)$ to  $E(G)$ {if and only if} $d(p,q)\leq 3r^*$.}
\end{itemize}

Then, the aim is to find a vertex cover in $G$  with a size bounded by $k$ that also satisfies the fairness constraints $k_1$ and $k_2$. 
Notably, although finding a vertex cover is $\mathcal{NP}$-hard in general, we manage to devise a polynomial-time exact algorithm for the problem based on certain special properties of the constructed auxiliary graph.

The key idea of our algorithm is to repeatedly eliminate degree 0 and degree 1 points in $G$ when any exist, by selecting the point covering such points from $V(G)$ as a new center in the set $C$. While $G$ contains no degree-0 or degree-1 points, our algorithm arbitrarily chooses an edge in $G$, and selects one of its endpoints as a new center by adding it to $C$. The procedure repeats until the problem reduces to Case (2). 
The detailed algorithm is illustrated in Alg.~\ref{alg:3}.

\begin{algorithm}[t]
  \small
 \caption{Processing of Case~(3)}
 \label{alg:3}
 \KwIn{
 $\Gamma_{1}$ and $\Gamma_{2}$ with $|\Gamma_{1}|>k_{1}$, $|\Gamma_{2}|> k_{2}$.
 }  
 \KwOut{Center set $C$.}
  \tcp{Phase 1: Construction of $G$ and~$C$.}
 Set $C\leftarrow \emptyset$\;
Construct the auxiliary graph $G(\Gamma_1\cup\Gamma_2, E)$, where an edge $e(p,q)$ exists in $E$ if and only if  $p\in \Gamma_{1}$, $q\in \Gamma_{2}$ and  $d(p,q) \leq 3r^*$\;
 \For{each point $i\in G$ with degree 0}{
    \If {$d(i,C) > 2r^*$}{
    Set $C\leftarrow C\cup\left\{ i\right\} $\;
    }
} 
Remove each point $i\in G$ with degree 0\;
\tcp{Phase 2: Final computation of $C$.}
 \While{$|C|\leq k=k_1+k_2$ and $V(G)\neq \emptyset$}{
\eIf{there exists no degree-1 point in $G$}{Arbitrarily select an edge $e(p,q)$ and remove it from $G$\;
Set  $C\leftarrow C\cup\left\{ p\right\}$ and $G\leftarrow G\setminus\{p,\,q\}$\;
Set  $\Gamma_l\leftarrow \Gamma_l \setminus\left\{ p\right\}$ and $\Gamma_{3-l}\leftarrow \Gamma_{3-l} \setminus\left\{ q\right\}$ for $\Gamma_l$ containing ${p}$\;
}{
 Find   $i\in G$ with largest $N_1(i)$,  where $N_1(i)$ is the set of degree-1 neighbours of $i$ in $G$\;
 Set $C\leftarrow C\cup\left\{ i\right\}$ and  $G\leftarrow G\setminus \{i\} \setminus N_1(i)$\;
 Set $\Gamma_{l}\leftarrow \Gamma_{l}\setminus\left\{ i\right\}$ and  $\Gamma_{3-l}\leftarrow \Gamma_{3-l}\setminus N_1(i)$ for  $\Gamma_l$ containing $i$\;
}

\If{ there exists $l$ with $|C\cap S_l| + |\Gamma_l| \leq k_l$  }{ 

Call \textit{the algorithm for Case (2)} to find the center set $C'$ on $\Gamma_l\cup \Gamma_{3-l}$ with $k'_l=k_l-|C\cap S_l|$.

Return  $C\cup C'$.
}  

}

\end{algorithm}

\begin{lemma}
\label{lem:the property of Gamma}
     Each pair of points from the same $\Gamma_l$ belongs to different clusters of the optimum solution when $\lambda = 2r^*$ before commencing of  Alg.~\ref{alg:3}.
\end{lemma}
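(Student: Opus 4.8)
The plan is to argue by contradiction using the triangle inequality, essentially reusing the pigeonhole reasoning behind Lemma~\ref{lem:d-independent center set} but now restricted to a single group. Before commencing Alg.~\ref{alg:3}, the set $\Gamma_l$ is exactly the minimal $\lambda$-independent center set built in Stage~1 with $\lambda = 2r^*$, so I would first invoke Condition~(1) of Definition~\ref{def:1} to record that any two distinct points $p, q \in \Gamma_l$ satisfy $d(p,q) > 2r^*$.

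Next I would fix the meaning of a \emph{cluster} of the optimum solution. Since $C^*$ is a feasible solution achieving radius $r^*$, every point $s \in S$, and in particular every point of $\Gamma_l \subseteq S_l \subseteq S$, has some center $c^* \in C^*$ with $d(s, c^*) \leq r^*$; the cluster associated with $c^*$ is the collection of all points served by $c^*$ in this sense. With this convention, I would suppose for contradiction that two distinct points $p, q \in \Gamma_l$ lie in the same cluster, i.e.\ are both served by a common center $c^* \in C^*$. Then $d(p, c^*) \leq r^*$ and $d(q, c^*) \leq r^*$, and the triangle inequality gives $d(p,q) \leq d(p, c^*) + d(c^*, q) \leq 2r^*$, directly contradicting $d(p,q) > 2r^*$. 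Hence no two points of $\Gamma_l$ can share a cluster, which is the claim.

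I do not expect a genuine obstacle here, since the argument is a single application of the triangle inequality identical in spirit to Lemma~\ref{lem:d-independent center set}. The only subtlety worth flagging is that the argument is insensitive to which group the serving center $c^*$ belongs to: although $p$ and $q$ are constrained to $S_l$, the optimal center serving them need not lie in $S_l$, yet the metric collision $d(p,q) \leq 2r^*$ follows regardless. I would therefore state the conclusion without reference to the group membership of $c^*$, making clear that the within-group separation property of $\Gamma_l$ is a purely metric consequence of the $2r^*$-independence.
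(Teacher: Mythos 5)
Your proof is correct and follows essentially the same argument as the paper: both suppose two points of $\Gamma_l$ share an optimal cluster, apply the triangle inequality through the common center to get $d(p,q)\leq 2r^*$, and contradict the $2r^*$-separation guaranteed by Condition~(1) of Definition~\ref{def:1}. Your added remark that the serving center need not belong to $S_l$ is a fine clarification but does not alter the argument.
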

\begin{proof}
When $\lambda=2r^*$, according to Def.~\ref{def:1}, the distance between each pair of points in $\Gamma_l$ is larger than $2r^*$. Suppose there exists a pair of points $i,j$ in $\Gamma_l$ such that they are in the same cluster of the optimum solution. That is, there exists a center $c^*$ in the center set of the optimum solution such that $d(i,c^*)\leq r^*$ and $d(j,c^*)\leq r^*$ both hold. So $d(i,j)\leq d(i,c^*)+d(j,c^*)\leq r^*+r^*=2r^*$, contradicting $d(i,j)>2r^*$.
\end{proof}

\begin{lemma}
\label{lem:fairness holds}
When Phase 1  of Alg.~\ref{alg:3} completes, $|C\cap S_l|\leq k_l$ holds for the current center set $C$. 
\end{lemma}
\begin{proof}
{Following Steps 3-6 in Alg.~\ref{alg:3}, we add $i\in G$ with degree $0$ to $C$.} That is, each point $i\in \Gamma_l$ added to $C$ satisfies $d(i,\Gamma_{3-l})>3r^*$. {Then for any point $j\in S_{3-l}$, we have $d(j,\Gamma_{3-l})\le 2r^*$, and hence $d(i,j)>r^{*}$ holds.} That is, each point $i\in C\cap \Gamma_l$ cannot belong to a cluster centered at a point of $S_{3-l}$ in the optimal solution. On the other hand, each pair of points $i,j\in C$ belongs to different optimal clusters because $d(i,j)>2r^*$ (similar to the proof of Lem.~\ref{lem:the property of Gamma}). Therefore, we have $|C\cap S_l|\leq k_l$ for every $l$.
\end{proof}

For brevity,  $\Gamma^j_l$ and $C^j$ respectively denote $\Gamma_l$ and $C$ in the $j$th iteration of the while-loop  (Phase 2) of Alg.~\ref{alg:3}.
\begin{lemma}
\label{lem:last fairness constraints}
    In the $j$th iteration of the \textit{while} loop of Alg.~\ref{alg:3},  we have:  (1) $|\Gamma^j_l|\leq k-|C^j|$;  (2) $|C^j|\leq k$; (3) $|C^j\cap S_l|\leq k_l$.
\end{lemma}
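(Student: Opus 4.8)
The plan is to prove all three parts simultaneously by induction on the while-loop iteration index $j$, treating them as loop invariants. Part (2) comes for free: since $|\Gamma^j_l|\ge 0$, part (1) immediately yields $|C^j|\le k$. So the genuine work is (1) and (3), and as a byproduct of (1) I note that whenever a branch actually executes we must have $|C^j|<k$, for otherwise (1) would force $\Gamma_1=\Gamma_2=\emptyset$ and the loop would already have halted.

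For the base case $j=1$ (the state just after Phase~1), part (3) is exactly \Cref{lem:fairness holds}. The substance of the base case is (1), i.e. $|\Gamma^1_l|\le k-|C^1|$, which I would establish by showing that the three sets $\Gamma^1_l$, $C^1\cap S_l$, and $C^1\cap S_{3-l}$ occupy pairwise-disjoint optimal clusters, so that their total size is at most the number $k$ of clusters. The three disjointness facts are: (i) the points of the original $\Gamma_l$ (which contains both $\Gamma^1_l$ and $C^1\cap S_l$) lie in distinct clusters by \Cref{lem:the property of Gamma}; (ii) each degree-$0$ point added in Phase~1 lies, by the Cond.~(1) argument of \Cref{thm:case2}, more than $r^*$ from every point of the other group, so its covering optimal center lies in its own group, making the clusters of $C^1\cap S_l$ and $C^1\cap S_{3-l}$ disjoint; (iii) a point of $C^1\cap S_{3-l}$ is degree-$0$, hence more than $3r^*$ from every point of $\Gamma_l$, which via the triangle inequality rules out sharing a cluster with any point of $\Gamma^1_l$.

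For the inductive step I would track precisely how one iteration moves the three quantities. Each iteration adds exactly one point to $C$, so $k-|C|$ drops by one. In the no-degree-$1$ branch a consumed edge $e(p,q)$ with $p\in\Gamma_1,q\in\Gamma_2$ removes one point from each of $\Gamma_1,\Gamma_2$; in the degree-$1$ branch the selected $i$ has $|N_1(i)|\ge 1$ (the branch is entered only when some degree-$1$ vertex exists, and that vertex's unique neighbour already has a nonempty degree-$1$ neighbourhood, so the maximizer does too), so removing $i$ and $N_1(i)$ deletes one point from $\Gamma_l$ and at least one from $\Gamma_{3-l}$. Either way each $|\Gamma_l|$ decreases by at least one while $k-|C|$ decreases by exactly one, so (1) is preserved, and (2) follows from (1).

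The main obstacle is part (3), since the branches augment whichever group the chosen vertex belongs to, with no explicit guard against overfilling a group. The observation that resolves this is that the termination test is exactly synchronized with invariant (1): if a branch ever brings $|C\cap S_l|$ up to $k_l$, then applying the just-verified invariant (1) at the post-branch state gives $|C\cap S_{3-l}|+|\Gamma_{3-l}|\le k-|C\cap S_l|=k_{3-l}$, which is precisely the condition under which the algorithm adds all of $\Gamma_{3-l}$ and returns. Hence the loop can never re-enter an iteration with a saturated group still being augmented, so every reached state satisfies $|C^j\cap S_l|\le k_l$. The one delicate point I would treat with care is the very first iteration, where a group could already be saturated by the Phase~1 degree-$0$ points before any test has run; there I would use that $|C^1\cap S_l|\le|C^*\cap S_l|\le k_l$ (the added degree-$0$ points of group $l$ inject into $C^*\cap S_l$ by facts (i)--(ii) above) and observe that such saturation again makes the complementary test hold, so it suffices to evaluate the test once at loop entry.
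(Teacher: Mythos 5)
Your proposal is correct and follows essentially the same route as the paper's own proof: induction on the iteration index $j$, with the base case of Cond.~(1) established by showing the relevant points must occupy distinct optimal clusters, Cond.~(2) derived immediately from Cond.~(1), and Cond.~(3) obtained from the key observation that saturating one group forces the complementary termination test of Alg.~\ref{alg:3} to fire via invariant (1). The differences are cosmetic or in your favor --- the paper phrases the base case through pairwise distances exceeding $2r^*$ rather than cluster disjointness, and where you explicitly flag and handle the Phase-1 saturation edge case, the paper dismisses it with a bare ``clearly.''
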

\begin{proof}
    For Cond.~(1), we first show it is true when $j=1$. Before the first while loop commences,  $|\Gamma^1_l|\leq k-|C^1|$ is satisfied. Suppose otherwise,  there would exist a contradiction as analyzed below. We will show the distance between every two points belonging to $C^1\cup \Gamma_l^{1}$ is larger than $2r^*$, so every two points therein cannot belong to the same cluster in an optimal solution,  and hence there are at least  $|\Gamma^1_l|+|C^1|>k$ centers in an optimal solution, leading to a contradiction.  This can be deduced following Steps 3-6 in Alg.~\ref{alg:3}, where points with a zero degree are added to $C^1$ for the first iteration. For each pair of  points $i\in C^1$ and $j\in \Gamma_l^{1}$, the absence of  edge $e(i,j)$ in $G$ indicates $d(i,j)>3r^*$ according to the construction of the auxiliary graph $G$. Moreover, for each two points $i,j\in \Gamma_l^{1}$, $d(i,j)>2r^*$ holds because  $\Gamma_l$ is a $2r^*$-independent center set for $S_l$ ($l\in \{1,2\}$), and $i,j$ are points in initial $\Gamma_l$. Lastly, for each  $i,j\in C^{1}$, $d(i,j)>2r^*$ holds because of Step 4 of the algorithm. Therefore, $d(i,j)>2r^*$ holds for each two points $i,j\in C^1\cup\Gamma_l$. 

We demonstrate that Cond.~(1) holds for the $(j+1)$th iteration by induction. Assume that Cond.~(1) is valid for the $j$th iteration concerning $\Gamma^j$. Then, we either remove two points or remove a point $i$ along with its degree-1 neighbors. 
Consequently, after this removal, it follows that:
\[|\Gamma^{j+1}_l| \leq |\Gamma^{j}_l| - 1 \leq k - |C^{j}| - 1 \leq k - |C^{j+1}|,\]
where the first inequality is due to the removal of at least one point from both $\Gamma^j_1$ and $\Gamma^j_2$ in every iteration. The second inequality is derived from our inductive assumption. The third inequality arises because, on one hand, we add one point to $C^{j}$ to form $C^{j+1}$, and on the other hand, according to the algorithm, the removal of points from $G$ does not result in new points of degree 0 in the auxiliary graph $G$. Hence, Cond.~(1) is also true for the $(j+1)$th iteration. 

Cond.~(2) can be immediately derived from Cond.~(1) because $|\Gamma^{j}_l|\geq 0$ holds. 

Moving on to Cond.~(3), Lem.~\ref{lem:fairness holds} establishes that in Steps 3-6 of Alg.~\ref{alg:3}, the inequality $|C^1\cap S_l|\leq k_l$ is true. That is, Cond.~(3) holds for $j=1$. Assume that Cond.~(3) holds for $j$th iteration. We need only to verify that this inequality holds in the $(j+1)$th iteration of Alg.~\ref{alg:3}.  According to Steps 8-15, Alg.~\ref{alg:3} involves adding a single point to $C$ in each iteration. So if $|C^j\cap S_l| < k_l$ holds for any $l\in\{1,2\}$, then $|C^{j+1}\cap S_l| \le k_l$. Otherwise (i.e., $|C^j\cap S_l| = k_l$),  we have $|\Gamma^{j}_{3-l}| + |C^{j}|  \leq k$ by  Cond.~(1). This implies
\[|\Gamma^{j}_{3-l}| + |C^{j}\cap S_{3-l}|  \leq k - |C^{j}\cap S_{l}| 
    = k-k_l = k_{3-l}. \]
According to  the \textit{if} condition in Step 16 of Alg.~\ref{alg:3}, when the above  inequality holds,  Alg.~\ref{alg:3} terminates and reduces to Case (2). This completes the proof. 
\end{proof}

\begin{theorem}\label{thm:the last fairness constraints}
     Alg.~\ref{alg:3}  terminates in time $O(k^2)$ and outputs a feasible solution $C$  to fair $k$-center, for which  $d(s,C)\leq 5r^*$ holds for any point  $s\in S$.
\end{theorem}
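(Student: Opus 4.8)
The plan is to prove the three assertions separately, reusing the feasibility bookkeeping already done in Lemma~\ref{lem:last fairness constraints} and spending the real effort on the $5r^*$ covering guarantee. Feasibility comes essentially for free: Conditions~(2) and~(3) of Lemma~\ref{lem:last fairness constraints} assert $|C^j|\le k$ and $|C^j\cap S_l|\le k_l$ throughout the while-loop, and the terminating \textbf{if}-block that returns is precisely the situation analyzed at the end of that lemma's proof. Hence the returned $C$ obeys $|C\cap S_l|\le k_l$ for both $l$ and therefore $|C|\le k$, so I would simply cite Lemma~\ref{lem:last fairness constraints} for this part.

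For termination and the $O(k^2)$ runtime, I would proceed as follows. By Lemma~\ref{lem:d-independent center set} we have $|\Gamma_1|,|\Gamma_2|\le k$, so building $G$ by testing every pair in $\Gamma_1\times\Gamma_2$ against the threshold $3r^*$ costs $O(k^2)$, and scanning the degree-$0$ vertices in Phase~1 (each with an $O(|C|)=O(k)$ distance check) costs another $O(k^2)$. In Phase~2, every iteration deletes at least one vertex from each of $\Gamma_1$ and $\Gamma_2$ (an edge endpoint pair $p,q$, or $i$ together with its nonempty set $N_1(i)$), so the number of iterations is at most $\min\{|\Gamma_1|,|\Gamma_2|\}\le k$; since locating the vertex with largest $N_1(\cdot)$ and updating degrees is $O(k)$ per iteration, Phase~2 is $O(k^2)$ as well. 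I would also record here the invariant that no vertex ever drops to degree $0$ during Phase~2: removing an edge endpoint (resp.\ $i$) lowers only the degrees of its partners, each of which had degree $\ge 2$ (resp.\ $\ge 1$) beforehand because degree-$0$ vertices were already purged in Phase~1. This invariant guarantees the loop makes genuine progress and never strands an unhandled isolated vertex.

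The heart of the argument is $d(s,C)\le 5r^*$, which I would derive from the intermediate claim that every point of the original $\Gamma_1\cup\Gamma_2$ either lies in the final $C$ or lies within $3r^*$ of it. I would verify this by inspecting each way a vertex can leave $G$: a Phase~1 degree-$0$ vertex is either added to $C$ or already within $2r^*$ of $C$; the discarded endpoint $q$ of a chosen edge $e(p,q)$ satisfies $d(q,p)\le 3r^*$ with $p\in C$; each neighbour in $N_1(i)$ is within $3r^*$ of $i\in C$; and at the terminating commit all of $\Gamma_l$ is placed in $C$, while every point of $\Gamma_{3-l}$ is either inserted into $C$ (those in $\Gamma'_{3-l}$) or already within $3r^*$ of $C$ by the definition of $\Gamma'_{3-l}$. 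Because $C$ only grows, these distances persist to the final $C$. Combining this $3r^*$-coverage of $\Gamma_1\cup\Gamma_2$ with Condition~(2) of Definition~\ref{def:1}, which places every $s\in S_l$ within $2r^*$ of some $p\in\Gamma_l$, the triangle inequality gives $d(s,C)\le d(s,p)+d(p,C)\le 2r^*+3r^*=5r^*$.

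I expect the main obstacle to be the bookkeeping of this covering claim rather than any single inequality: one must be sure that no point of $\Gamma_1\cup\Gamma_2$ slips through all four removal mechanisms without being charged to a nearby center, which is exactly where the ``no new degree-$0$ vertex'' invariant and the terminating commit step do the heavy lifting. In particular I would argue that the commit is reached before $V(G)$ could empty in a way leaving a vertex uncovered, tying back to $|\Gamma^j_l|\le k-|C^j|$ from Lemma~\ref{lem:last fairness constraints} to show the condition $|C\cap S_l|+|\Gamma_l|\le k_l$ must eventually fire for some $l$.
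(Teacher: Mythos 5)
Your proposal is correct and follows essentially the same route as the paper's own proof: feasibility is delegated to Conds.~(2) and~(3) of Lem.~\ref{lem:last fairness constraints}, the runtime is bounded by the same $O(k^2)$ construction-plus-$O(k)$-iterations count, and the $5r^*$ bound comes from the identical $2r^*+3r^*$ triangle-inequality split between a stream point's nearest $\Gamma_l$ representative and a center within $3r^*$ of that representative. Your coverage bookkeeping is in fact somewhat more exhaustive than the paper's (you explicitly handle Phase-1 degree-0 vertices that are \emph{not} added to $C$, the discarded edge endpoint, the set $N_1(i)$, and the leftover points of $\Gamma_{3-l}$ at the terminating commit, whereas the paper only spells out the edge-endpoint case), but this is a refinement of the same argument rather than a different approach.
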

    
\begin{proof}
For the runtime,  Step 2 of Alg.~\ref{alg:3} takes $O(k^2)$ time to construct $G$ as $G$ contains at most $O(k)$ vertices and $O(k^2)$ edges. Then, the while-loop (Steps 7-18)  iterates for at most   $O(k)$ times as it removes at least one point from $G$ in each iteration, and each iteration takes  $O(k)$ time. Therefore, the total runtime sums up to $O(k^2)$.

From Cond.~(2) and (3) of Lem.~\ref{lem:last fairness constraints}, we immediately get $|C|\leq k$ and $|C\cap S_l|\leq k_l$, which indicates $C$ is a feasible solution. It remains to bound the distance from any point  $s\in S$ to $C$. For any $s\in S_l$,  a point $i\in \Gamma_l$ must exist such that $d(s,i)\leq 2r^*$ holds according to the definition of  $\Gamma_l$. If $i\in C$ holds, then we have  $d(s,C)\leq 2r^*$. Otherwise, i.e. $i\notin C$, then there must exist an edge $e(i,j)$ with $j\in  \Gamma_{3-l}$ in the auxiliary graph. According to Steps 8-15 in Alg.~\ref{alg:3}, at least one endpoint of $e(i,j)$, either $i$ or $j$, must be added to $C$. Since  $i\notin C$ by assumption, $j\in C$ holds. Moreover, the existence of edge $e(i,j)$ also means $d(i,j)\leq 3r^*$. So by triangle inequality, we have 
\[d(s,C)\leq d(s,j)\leq d(s,i)+d(i,j)\leq2r^{*}+3r^{*}=5r^{*}.\]
Therefore, regardless of whether $i\in C$ holds or not, $d(s,C)\leq 5r^{*}$ holds for any $s\in S_l$ for any $l\in\{1,2\}$.\end{proof}

Combining Cases (1), (2) and (3) with the streaming stage, we eventually achieve a complete streaming algorithm for fair $k$-center provided that $r^*$ is known. For the space complexity, the algorithm stores $m=2$  independent center sets for each case where each independent center set has $O(k)$ points, so it consumes a space complexity of $O(k)$.   However, as $r^*$ is actually unknown, we need an $O(\log \alpha)$ multiplicative factor over the memory complexity, where $\alpha$ is the aspect ratio that is the ratio between the maximum and minimum distance between the points of $S$.  
Combining Thm.~\ref{thm:the last fairness constraints}, we have:

\begin{corollary}
When $r^*$ is unknown, fair $k$-center admits a streaming algorithm  with an approximation ratio $5+\epsilon$ and a memory complexity $O(k\log \alpha)$. 
\end{corollary}

Notably, we can extend the algorithms respectively for Cases~(2) and Cases~(3) and obtain an algorithm for general $m$ (See the extended algorithm in the full version). 

\section{ Streaming Semi-structured Datasets}

In this section, we consider scenarios in which the data points are streamed following the demographic group order, i.e. all points belonging to $S_{l}$ arrive before $S_{l+1}$, $\forall l\in[m-1]$. We devise an algorithm for $m=2$, achieving a ratio of $3+\epsilon$ that almost matches the state-of-the-art ratio of $3$ for the offline setting.
The algorithm can be extended to achieve a ratio $4$ for general $m$ as shown in the full version.

Our key idea of the improved algorithm for $m=2$ is to obtain an extra point set $\Gamma_{sub}$ in the streaming stage and use some of its points as centers for  points in $\Gamma_1$. 
More precisely, we first  construct an independent set $\Gamma_1'=\Gamma_1$ with $\lambda=2r^*$ upon the data stream of $S_1$. Then, upon the stream of $S_2$, we select points using different designated approaches depending on whether $\vert \Gamma_1' \vert \le k_1$ holds. As the constructed independent center sets are slightly different, we adopt notation $\Gamma_l'$ instead of  $\Gamma_l$. 
The detailed algorithm is as in Alg.~\ref{alg:3-ratio_for_metric_space}.

Then we prove that with only the representative points of $\Gamma_1' \cup \Gamma_2'$ and $\Gamma_{sub}$, our post-streaming algorithm can compute an approximation solution with ratio $3$.

\begin{algorithm}[t]
 \small
 \setcounter{AlgoLine}{0}
 \SetCommentSty{itshape}
 \caption{Streaming semi-structured data} 
\label{alg:3-ratio_for_metric_space} 
\KwIn{A stream of points $S=\bigcup_{l}S_{l}$ in which   all points of $S_{1}$ arrive before the points of $S_{2}$, and $\lambda=2r^{*}$.}
\KwOut{The sets of representative points.}

Set $\Gamma_{sub}=\emptyset$ and $\Gamma_{l}'=\emptyset$ for  $l=1,2$\;
\Upon{each arriving  point $i\in S_1$}{
    \If{$d\left(i,\Gamma_{1}'\right)> \lambda$}{ 
    Set $\Gamma_{1}' \leftarrow \Gamma_{1}'\cup\left\{ i\right\}$ \tcp*{Recall that $d\left(i,\emptyset\right) = \infty$.}   
    }    
} 
    \Upon{each arriving  point $i\in S_{2}$}{
    \eIf{$|\Gamma_1'|\leq k_1$}{ 
        \If{$d\left(i,\Gamma_{1}'\right)> 3\lambda/2$ and $d(i,\Gamma_{2}')> \lambda$}{
            Set $\Gamma_{2}' \leftarrow \Gamma_{2}'\cup\left\{ i\right\}$\; 
        }
    }{ 
    \If{ $d\left(i,\Gamma_{1}'\cup \Gamma_2' \right)> \lambda$}{ 
        Set $\Gamma_{2}' \leftarrow \Gamma_{2}'\cup\left\{ i\right\}$\; 
    }     
     \If{there exists a point $j\in \Gamma_{1}'$ that has no replacement in $\Gamma_{sub}$ and $d(i,j)\leq \lambda/2$}{ 
        Set $\Gamma_{sub} \leftarrow \Gamma_{sub}\cup\left\{ i\right\}$\; 
    }    
    } 
    }    
   Return $\Gamma_{1}'$, $\Gamma_2'$, and $\Gamma_{sub}$.
\end{algorithm}

\begin{lemma}\label{lem:sizeofc}
    For $\Gamma_{1}'$ and $\Gamma_2'$ produced by Alg.~\ref{alg:3-ratio_for_metric_space}, we have: (1) $|\Gamma_{1}'|+|\Gamma_2'|\leq k$;  (2) $|\Gamma_2'|\leq k_2$; (3) There exists a subset $\Gamma_1''\subseteq \Gamma_1'$ and accordingly a subset $\Gamma'_{sub}=\{\sigma(c)\vert c\in \Gamma_1''\}\subseteq \Gamma_{sub}$, such that  $|\Gamma_1'\setminus \Gamma_1'' \cup \Gamma'_{sub} \cap S_1|\leq k_1$ holds, where $\sigma(c)$ is the stored replacement of $c$ in $\Gamma_{sub}$ with $d(c,\sigma(c))\leq r^{*}$.
\end{lemma}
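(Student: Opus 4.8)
The plan is to prove the three assertions in order, leaning throughout on one unifying observation: in \emph{both} branches of Alg.~\ref{alg:3-ratio_for_metric_space} the set $\Gamma_1'\cup\Gamma_2'$ is pairwise $2r^*$-separated, and no optimal center can cover (within $r^*$) two such points. For Cond.~(1) I would first check separation pair by pair. Inside $\Gamma_1'$ it is immediate since $\Gamma_1'$ is a $2r^*$-independent center set; inside $\Gamma_2'$ it holds because each point was admitted only when its distance to $\Gamma_2'$ exceeded $\lambda=2r^*$; and across the two sets it holds because the admission guard is either $d(i,\Gamma_1')>3\lambda/2>2r^*$ (when $|\Gamma_1'|\le k_1$) or $d(i,\Gamma_1'\cup\Gamma_2')>2r^*$ (otherwise). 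Since no two $2r^*$-separated points share an optimal cluster, the pigeonhole argument of Lem.~\ref{lem:d-independent center set} applies and yields $|\Gamma_1'|+|\Gamma_2'|=|\Gamma_1'\cup\Gamma_2'|\le k$.

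For Cond.~(2) I would split on the branch. When $|\Gamma_1'|\le k_1$, I claim no $p\in\Gamma_2'$ can be covered by an $S_1$-center: if some $c^*\in C^*\cap S_1$ covered $p$, then choosing $g\in\Gamma_1'$ with $d(c^*,g)\le 2r^*$ (which exists because $\Gamma_1'$ covers $S_1$) would force $d(p,g)\le r^*+2r^*=3r^*$, contradicting the guard $d(p,\Gamma_1')>3r^*$. Hence each point of $\Gamma_2'$ uses a distinct $S_2$-center and $|\Gamma_2'|\le|C^*\cap S_2|\le k_2$. The harder branch is $|\Gamma_1'|>k_1$, where a $\Gamma_2'$ point may legitimately be covered by an $S_1$-center, so I would resort to a counting argument. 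Writing $a_1,a_2$ for the number of $\Gamma_1'$ points covered by $S_1$- and $S_2$-centers and $b_1,b_2$ likewise for $\Gamma_2'$, distinctness of covers gives $a_1+b_1\le k_1$ and $a_2+b_2\le k_2$. Since $a_1+a_2=|\Gamma_1'|>k_1\ge a_1+b_1$, I obtain $a_2>b_1$, whence $|\Gamma_2'|=b_1+b_2\le a_2+b_2\le k_2$.

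Cond.~(3) is the crux and the reason $\Gamma_{sub}$ is built at all; it is non-trivial only when $|\Gamma_1'|>k_1$ (otherwise take $\Gamma_1''=\emptyset$). I would let $\Gamma_1''$ be the set of points of $\Gamma_1'$ that $C^*$ covers by an $S_2$-center. Because $\Gamma_1'$ is $2r^*$-separated, its points use distinct optimal centers, of which at most $k_1$ lie in $S_1$; hence $|\Gamma_1''|\ge|\Gamma_1'|-k_1$ and the survivors $\Gamma_1'\setminus\Gamma_1''$ number at most $k_1$. Since every replacement belongs to $S_2$, the set $(\Gamma_1'\setminus\Gamma_1'')\cup\Gamma_{sub}'$ meets $S_1$ exactly in $\Gamma_1'\setminus\Gamma_1''$, which delivers the stated bound \emph{provided} each $c\in\Gamma_1''$ truly has a replacement. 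Establishing this order-dependent bookkeeping is where I expect the main obstacle to lie: I would argue from the $S_2$-center $c^*$ covering $c$, which satisfies $d(c,c^*)\le r^*=\lambda/2$ and lies within $r^*$ of no other point of $\Gamma_1'$ (two such points would be within $2r^*$, contradicting separation). Thus when $c^*$ streams in, $c$ is the unique eligible unreplaced candidate within $\lambda/2$, so either $c$ already holds a replacement or $c^*$ is installed as $\sigma(c)$; either way $c$ ends replaced, and this same one-center-to-one-point uniqueness makes the map $c\mapsto\sigma(c)$ injective, giving $|\Gamma_{sub}'|=|\Gamma_1''|$ and completing the proof. The subtle point to guard against is precisely a single $S_2$ point being claimed as a replacement for two different $\Gamma_1'$ points, which the $2r^*$-separation of $\Gamma_1'$ rules out.
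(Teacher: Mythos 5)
Your proof is correct and follows essentially the same route as the paper's: pairwise $2r^*$-separation plus the pigeonhole argument for Cond.~(1), a branch split on $|\Gamma_1'|\le k_1$ versus $|\Gamma_1'|>k_1$ for Cond.~(2), and for Cond.~(3) taking $\Gamma_1''$ to be the points of $\Gamma_1'$ covered by $S_2$-centers and invoking the streaming replacement mechanism. The only differences are minor: in the $|\Gamma_1'|>k_1$ branch of Cond.~(2) the paper simply reuses Cond.~(1) via $|\Gamma_2'|\le k-|\Gamma_1'|<k-k_1=k_2$ rather than your four-variable counting, and your order-dependent argument that every $c\in\Gamma_1''$ really ends up with an (injectively assigned) replacement $\sigma(c)$ spells out a step the paper's own proof states without justification.
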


\begin{proof}
    For Cond.~(1), according to the algorithm,  we have $d(p,q)>2r^*$ for any pair of  two points $p,q\in \bigcup\Gamma_{l}', l=1,2$. So every pair of points of $\bigcup\Gamma_{l}'$ must appear  in two different clusters in the optimal solution similar to the proof of Lem~\ref{lem:the property of Gamma}. Thus, $|\bigcup_{l=1}^2\Gamma_{l}'|\leq k$ holds.

    For  Cond.~(2), if $|\Gamma_1'|> k_1$, we have $|\Gamma_2'|\leq k-|\Gamma_1'| < k-k_1=k_2$ . Otherwise (i.e., $|\Gamma_1'|\le k_1$), as $\Gamma_1'$ covers all points of the clusters centered at points of $S_1$ in the optimal solution within a radius $3r^*$. Then, the remaining points to be covered in the algorithm appear in only the optimal clusters centered at points of $S_2$, which are at most $k_2$ clusters. On the other hand, every pair of points in $\Gamma_2'$ must appear in different optimal clusters. So $|\Gamma_2'|\le k_2$.

      For Cond.~(3),   if $|\Gamma_1'|\leq k_1$ holds after streaming then the proof is done. So we need only to prove the case for $|\Gamma_1'|> k_1$. Clearly, there exists a subset of $\Gamma_1'$ that contains at least $|\Gamma_1'| - k_1$ points, say $\Gamma_1''$, appearing in optimal clusters centered at points of $S_2$.  Clearly, there exists at least one point  $j\in S_2$ for  each point  $i\in \Gamma_1''$ with $d(i,j)\leq r^*$. Moreover, according to Alg.~\ref{alg:3-ratio_for_metric_space},  we add such point $\sigma(i)=j$ to $\Gamma_{sub}$ as a replacement point of $i$, collectively composing the set $\Gamma'_{sub}$. That is, $\Gamma_1'\setminus \Gamma_1'' \cup \Gamma'_{sub}$ feasibly covers all points of $\Gamma'_1$ with radius $r^*$, and contains at most  $k_1$ points of $\Gamma_1'$. 
\end{proof}
Following the above lemma, we can immediately obtain a 3-approximation solution by simply using $C=\Gamma'_2 \cup (\Gamma_1'\setminus \Gamma_1'' \cup \Gamma'_{sub})$ as the desired center set. 

\section{Experimental Results}
\label{sec:experiment}

In this section, we conduct an empirical evaluation of our algorithms utilizing both simulated and real-world datasets, compared with three previous approximation algorithms that serve as baselines. All experiments are averaged over at least 20 iterations and are conducted on a Linux machine equipped with a 12th Gen Intel(R) Core(TM) i9-12900K CPU and 32 GB of RAM using Python 3.8. A detailed description of the experimental setup is provided in follows.

\subsection{Experimental Setting}
\label{subsec:Experiment}

\paragraph{Datasets.} We employ both simulated and real-world datasets to evaluate our approximation algorithm. We summarize the detailed datasets in Tab.~\ref{tab:datasets}.

\textbf{\textit{Real-world datasets}}. We apply our algorithms to {seven real-world datasets}: Wholesale~\cite{wholesale_customers_292}, Student~\cite{student_performance_320}, Bank~\cite{bank_marketing_222}, CreditCard~\cite{default_of_credit_card_clients_350}, Adult~\cite{adult_2}, SushiA and CelebA~\cite{liu2015faceattributes}, following the most recent related works~\cite{jones2020fair,chen2019proportionally}. Consistent with previous studies, we utilize meaningful numerical features for clustering and incorporate selected binary categorical attributes to construct datasets with fair constraints across all datasets.

\textbf{\textit{Simulated datasets}}. We provide two datasets that serve as benchmarks: the simulated dataset provides an exact optimal solution, while the large-scale dataset incurs more file I/O overhead, making it ideal for measuring computational efficiency. First, the dataset (called \textit{SimuA}) is used to assess the \textit{empirical approximation ratio} of our algorithms and the baseline methods. Inspired by previous research~\cite{kleindessner2019fair}, we use their method to construct a simulated dataset with a known optimal solution for the fair $k$-center problem. Secondly, we leverage the implementation from~\citet{chiplunkar2020solve} to generate a \textit{100 GB} dataset (called \textit{SimuB}), which allows us to evaluate the {runtime} performance of different algorithms. 

\paragraph{Constraint Settings.} For the simulated dataset, we configured the parameters to compare both the average and maximum approximation ratios for these algorithms and to verify the approximation ratio of our algorithms. To ensure fair center selection, we aligned the number of required centers from each group to be proportional to the size of that group. Following the fairness principle of disparate impact as outlined by~\citet{feldman2015certifying}, we restricted the selection to $k_l$ data points from the $l$th group to serve as centers. We then evaluated the clustering quality by varying the number of $k$ clusters across these datasets.

\begin{figure*}[t]
  \centering
  \includegraphics[width=\textwidth]{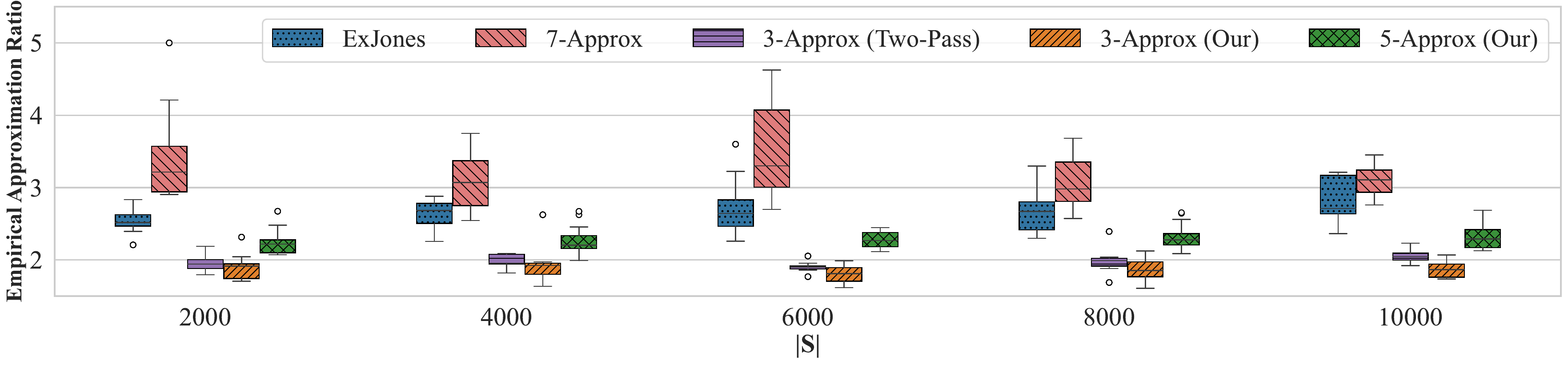} 
  \caption{Empirical approximation ratio ($cost/r^*$) of our algorithms in comparison with other baselines.}
  \label{fig:exp_approx}
\end{figure*}
\paragraph{Algorithms.} In comparison with existing streaming fair $k$-center clustering algorithms, this experiment includes an extended algorithm from the offline 3-approximation algorithm by~\citet{jones2020fair} (denoted as \textit{ExJones}), a two-pass streaming 3-approximation algorithm by~\citet{chiplunkar2020solve}  (denoted as two-pass \textit{$3$-Approx}), and a one-pass streaming 7-approximation algorithm~\cite{lin2024streaming} (denoted as \textit{$7$-Approx}). Moreover, we propose two algorithms in this paper: 1) a one-pass 5-approximation algorithm (denoted as \textit{$5$-Approx}); 2) a one-pass 3-approximation algorithm with $m = 2$ (denoted as \textit{$3$-Approx}) for semi-structured data streams in our experiments.

\descr{Metrics}.
To evaluate clustering quality, we use the \textit{cost} metric defined in Sec.~\ref{sec:prob_statement} and compare its average value across datasets. We also report runtime, measured in seconds.

\begin{table}[t]
\centering
{\footnotesize
\begin{tabularx}{\linewidth}{lXl}
\toprule[1pt]
\textbf{Dataset} & \textbf{\#Records} & \textbf{\#Dim.} \\
\midrule[0.8pt]
Wholesale  & 440 & 6 \\
Student & 649 & 16 \\
Bank & 4,521 & 7 \\
SushiA & 5,000 & 10 \\
CreditCard & 30,000 & 19 \\
Adult & 32,561 & 6 \\
CelebA & 202,599 & 15,360 \\
\midrule[0.4pt]
SimulatedA & 4,000,000& 1,000 \\
SimulatedB & [2k,4k,6k,8k,10k] & 2\\
\midrule[1pt]
\end{tabularx}
}
\caption{Datasets Summary.\label{tab:datasets}}
\end{table}

\subsection{Experimental Analysis}
\label{subsec:Exper}

In Fig.~\ref{fig:exp_approx}, we first report the empirical approximation ratios of the algorithms on a simulated dataset with known optimal solutions. This allows us to compare the experimental objective values of our algorithms and the baselines against the optimal objective. We then evaluate the clustering costs of these algorithms on real-world datasets. Finally, we report the runtime(s) performance on large-scale simulated datasets, focusing on file I/O operations in our algorithms compared to other baselines.

\paragraph{Approximation Factor.}
We compare algorithm performance by evaluating the relative solution ratio against the optimal solution provided for the simulated dataset.  The ratio of the evaluation result can be called the empirical approximation ratio, and the maximum value represents the worst-case cost in all the experimental results. We run the code of constructing the simulated dataset shared by~\citet{kleindessner2019fair} for their algorithm setting $m = 2$, $k = 100$ and varying the size of the simulated datasets  $|S|$ from $2,000$ to $10,000$. For each size of the dataset, we perform $10$ runs on $10$ kinds of fairness with $10$ different constraint ratios.

\begin{table}[t]
\centering
\scalebox{0.75}{
\begin{tabular}{l|c|c|c||c||c}
\toprule[1pt]
\textbf{Dataset} & ExJones & $7$-Approx & $5$-Approx & \makecell{$3$-Approx\\(\textbf{Semi})} & \makecell{$3$-Approx\\(\textbf{Two pass})} \\
\midrule[0.8pt]
Wholesale & 1.36 & 1.28 & \underline{1.04} & \textbf{0.84} & {0.61} \\
Student   & \underline{1.85} & 1.85 & 1.90 & \textbf{1.79} & {1.73} \\
CelebA    & 24.36 & 26.27 & \underline{\textbf{21.34}} & \textbf{21.34} & {21.25} \\
Bank      & \underline{0.49} & 0.93 & 0.61 & \textbf{{0.40}} & 0.43 \\
SushiA    & 1.42 & 1.66 & \textbf{\underline{1.41}} & 1.51 & 1.48 \\
\midrule[0.5pt]
Credit    & \underline{\textbf{0.93}} & 1.89 & 1.03 & 0.94 & {0.81} \\
Adult     & 0.65 & 1.08 & \underline{0.62} & \textbf{{0.52}} & 0.56 \\
\midrule[1pt]
\end{tabular}
}
\caption{Cost comparison on the real-world datasets. (The \underline{underline} values highlight the best results of the general one-pass algorithms. In addition, the bold indicates the best result when semi-structured data streaming is included alongside the one-pass streaming algorithms.)\label{tab:exp_cost_real}}
\end{table}

We observe that the practical performance of all algorithms aligns well with their theoretical bounds. Notably, the $3$-Approx (semi) demonstrates the best performance, even outperforming the two-pass 3-Approx algorithm~\cite{chiplunkar2020solve}. 
When the dataset size $|S|=2,000$, the maximal empirical approximation ratio of 7-Approx is greater than 5, validating the suitability of the dataset for testing these algorithms. Compared with the method of Jones et al., our algorithm achieves a lower empirical approximation ratio, demonstrating the advantage of our algorithm. 
Consequently, our algorithms exhibit superior performance on the datasets, conforming to the theoretical guarantees.

\subsubsection{Comparison with Baselines.}
 We report the clustering costs achieved by our algorithms and three baselines across seven real-world datasets in Tab.~\ref{tab:exp_cost_real}. The parameter $k$ is set proportionally to the dataset size: datasets above the dividing line use $1\%$ of the total number of records, while those below use $1\text{\textperthousand}$, which depends on the variation in dataset scales. This same proportion is also applied to the center constraints across different groups. All datasets are normalized prior to running the algorithms, and for CelebA, the first 1,000 samples are used in the experiments.

In Tab.~\ref{tab:exp_cost_real}, we observe that $3$-Approx (Semi) outperforms most of the other one-pass algorithms, while the two-pass $3$-approximation algorithm maintains better performance than the one-pass streaming algorithms on the seven datasets.
We attribute this to the fact that our algorithm is a modification of the original center selection method based on streaming. By combining streaming techniques with the structure of an \textit{independent center set}, it effectively identifies more meaningful center points. As a result, it achieves lower clustering costs than the other baselines under the one-pass streaming setting (shown as the underline values).
We also observe that the results reported by (two-pass) $3$-Approx consistently outperform all the one-pass streaming algorithms. This can be explained by the fact that the two-pass 3-approximation algorithm has the advantage of refining the center set in the second pass, enabling it to identify more accurate center points.

\begin{figure}[t]
  \centering
  \includegraphics[width=\linewidth]{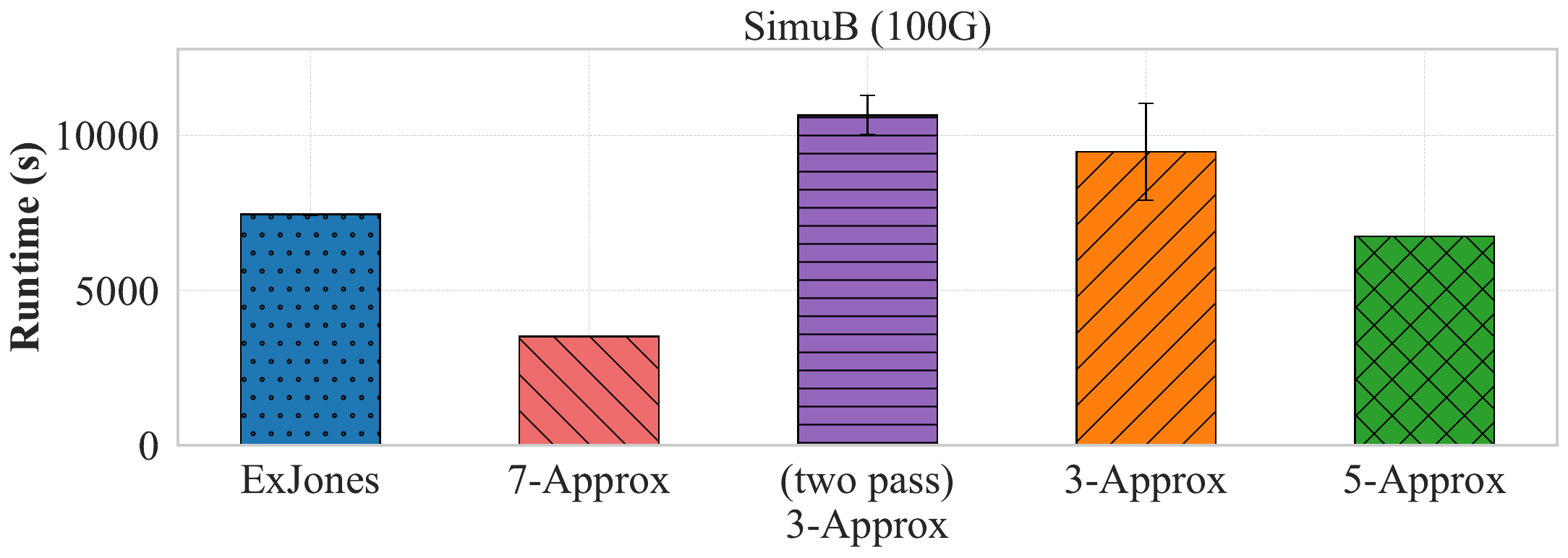} 
  \caption{Runtime on the 100G simulated dataset.\label{fig:exp_runtime}}
\end{figure}

\subsubsection{Streaming Runtime.}
Using the method from~\citet{chiplunkar2020solve}, we generate a 100 GB dataset containing 4,000,000 items and 1,000 features. In Fig.~\ref{fig:exp_runtime}, we evaluate the runtime performance of our algorithms and baseline methods on this dataset.
Our results show that the $5$-Approx algorithm runs faster than the algorithms of~\citeauthor{chiplunkar2020solve} and \citet{jones2020fair}, which aligns with theoretical expectations, as all of our algorithms are one-pass streaming methods. 
Among these methods, 7-Approx~\cite{lin2024streaming} runs the fastest, because it selects fewer points during the streaming process and omits post-processing steps, thereby reducing computational overhead.

\section{Conclusion}
In this paper, we first devise a one-pass streaming algorithm with an approximation ratio of $5+\epsilon$ and a memory complexity of $O(k \log \alpha)$, where $\alpha$ is the aspect ratio. Motivated by the broad applications of semi-structured data streams, we then present an algorithm with a ratio of $(3+\epsilon)$ and the same memory complexity for the problem on semi-structured streams with $m=2$. This $(3+\epsilon)$-approximation can be directly used to solve the offline problem and achieve a slightly improved ratio of 3, matching the state-of-the-art ratio for offline fair $k$-center. Lastly, we conduct extensive experiments to demonstrate the performance gains of our algorithms compared to baselines, including state-of-the-art methods. In the future, we will strive to design both offline and streaming algorithms with better approximation ratios for fair $k$-center, $k$-median, and $k$-means in doubling metrics and Euclidean spaces. 

\section*{Acknowledgments}

This work is supported by National Natural Science Foundation of China (No. 12271098) and Key Project of the Natural Science Foundation of Fujian Province (No. 2025J02011). 

\bibliography{reference}

@inproceedings{kleindessner2019fair,
  title={Fair $k$-center clustering for data summarization},
  author={Kleindessner, Matth{\"a}us and Awasthi, Pranjal and Morgenstern, Jamie},
  booktitle={Proceedings of the 36th International Conference on Machine Learning (ICML-19)},
  pages={3448--3457},
  year={2019},
  organization={PMLR}
}

@article{chen2016matroid,
  title={Matroid and knapsack center problems},
  author={Chen, Danny Z and Li, Jian and Liang, Hongyu and Wang, Haitao},
  journal={Algorithmica},
  volume={75},
  pages={27--52},
  year={2016},
  publisher={Springer}
}

@article{jia2022fair,
  title={Fair colorful $k$-center clustering},
  author={Jia, Xinrui and Sheth, Kshiteej and Svensson, Ola},
  journal={Mathematical Programming},
  volume={192},
  number={1-2},
  pages={339--360},
  year={2022},
  publisher={Springer}
}

@inproceedings{anegg2022techniques,
  title={Techniques for Generalized Colorful $k$-Center Problems},
  author={Anegg, Georg and Vargas Koch, Laura and Zenklusen, Rico},
  booktitle={Proceedings of the 30th Annual European Symposium on Algorithms (ESA-22)},
  year={2022},
  organization={Schloss Dagstuhl-Leibniz-Zentrum f{\"u}r Informatik}
}

@inproceedings{angelidakis2022fair,
  title={Fair and Fast $k$-Center Clustering for Data Summarization},
  author={Angelidakis, Haris and Kurpisz, Adam and Sering, Leon and Zenklusen, Rico},
  booktitle={Proceedings of the 39th International Conference on Machine Learning (ICML-22)},
  pages={669--702},
  year={2022},
  organization={PMLR}
}

@article{bandyapadhyay2019constant,
  title={A Constant Approximation for Colorful $k$-Center},
  author={Bandyapadhyay, Sayan and Inamdar, Tanmay and Pai, Shreyas and Varadarajan, Kasturi},
  journal={Leibniz International Proceedings in Informatics, LIPIcs},
  volume={144},
  year={2019},
  publisher={Schloss Dagstuhl-Leibniz-Zentrum fuer Informatik GmbH, Wadern/Saarbruecken~…}
}

@inproceedings{jones2020fair,
  title={Fair $k$-centers via maximum matching},
  author={Jones, Matthew and Nguyen, Huy and Nguyen, Thy},
  booktitle={Proceedings of the 37th International Conference on Machine Learning (ICML-20)},
  pages={4940--4949},
  year={2020},
  organization={PMLR}
}

@article{kale2019small,
  title={Small Space Stream Summary for Matroid Center},
  author={Kale, Sagar},
  journal={Approximation, Randomization, and Combinatorial Optimization. Algorithms and Techniques (APPROX/RANDOM 2019)},
  volume={145},
  pages={20},
  year={2019},
  publisher={Schloss Dagstuhl--Leibniz-Zentrum fuer Informatik}
}

@article{matthew2008streaming,
  title={Streaming Algorithms for $k$-Center Clustering with Outliers and with Anonymity},
  author={Matthew McCutchen, Richard and Khuller, Samir},
  journal={Approximation, Randomization and Combinatorial Optimization. Algorithms and Techniques},
  pages={165--178},
  year={2008},
  publisher={Springer}
}

@article{nguyen2022fair,
  title={Fair Range $k$-center},
  author={Nguyen, Huy L{\^e} and Nguyen, Thy and Jones, Matthew},
  journal={arXiv preprint arXiv:2207.11337},
  year={2022}
}

@inproceedings{feldman2015certifying,
  title={Certifying and removing disparate impact},
  author={Feldman, Michael and Friedler, Sorelle A and Moeller, John and Scheidegger, Carlos and Venkatasubramanian, Suresh},
  booktitle={proceedings of the 21st ACM SIGKDD international conference on knowledge discovery and data mining (KDD-15)},
  pages={259--268},
  year={2015}
}

@inproceedings{chierichetti2017fair,
  title={Fair clustering through fairlets},
  author={Chierichetti, Flavio and Kumar, Ravi and Lattanzi, Silvio and Vassilvitskii, Sergei},
  booktitle={Proceedings of the 31st International Conference on Neural Information Processing Systems (NeurIPS-17)},
  pages={5036--5044},
  year={2017}
}

@inproceedings{chiplunkar2020solve,
  title={How to solve fair k-center in massive data models},
  author={Chiplunkar, Ashish and Kale, Sagar and Ramamoorthy, Sivaramakrishnan Natarajan},
  booktitle={Proceedings of the 37th  International Conference on Machine Learning (ICML-20)},
  pages={1877--1886},
  year={2020}
}

@article{ceccarello2019solving,
  title={Solving k-center Clustering (with Outliers) in MapReduce and Streaming, almost as Accurately as Sequentially},
  author={Ceccarello, Matteo and Pietracaprina, A and Pucci, G},
  journal={Proceedings of the VLDB Endowment},
  volume={12},
  number={7},
  pages={766--778},
  year={2019},
  publisher={VLDB Endowment}
}

@inproceedings{chen2019proportionally,
  title={Proportionally fair clustering},
  author={Chen, Xingyu and Fain, Brandon and Lyu, Liang and Munagala, Kamesh},
  booktitle={Proceedings of the 36th International Conference on Machine Learning (ICML-19)},
  pages={1032--1041},
  year={2019},
  organization={PMLR}
}

@article{mahabadi2024core,
  title={Core-sets for Fair and Diverse Data Summarization},
  author={Mahabadi, Sepideh and Trajanovski, Stojan},
  journal={Advances in Neural Information Processing Systems},
  volume={36},
  year={2024}
}

@inproceedings{wu2024new,
  title={New Algorithms for Distributed Fair k-Center Clustering: Almost Accurate as Sequential Algorithms},
  author={Wu, Xiaoliang and Feng, Qilong and Huang, Ziyun and Xu, Jinhui and Wang, Jianxin},
  booktitle={Proceedings of the 23rd International Conference on Autonomous Agents and Multiagent Systems (AAMAS-24)},
  pages={1938--1946},
  year={2024}
}

@inproceedings{backurs2019scalable,
  title={Scalable fair clustering},
  author={Backurs, Arturs and Indyk, Piotr and Onak, Krzysztof and Schieber, Baruch and Vakilian, Ali and Wagner, Tal},
  booktitle={Proceedings of the 36th International Conference on Machine Learning (ICML-19)},
  pages={405--413},
  year={2019},
  organization={PMLR}
}

@inproceedings{lin2024streaming,
  title={Streaming Fair k-Center Clustering over Massive Dataset with Performance Guarantee},
  author={Lin, Zeyu and Guo, Longkun and Jia, Chaoqi},
  booktitle={Proceedings of the 28th Pacific-Asia Conference on Knowledge Discovery and Data Mining (PAKDD-24)},
  pages={105--117},
  year={2024},
  organization={Springer}
}

@article{chen2024approximation,
  title={An approximation algorithm for diversity-aware fair k-supplier problem},
  author={Chen, Xianrun and Ji, Sai and Wu, Chenchen and Xu, Yicheng and Yang, Yang},
  journal={Theoretical Computer Science},
  volume={983},
  pages={114305},
  year={2024},
  publisher={Elsevier}
}

@ARTICLE{guo2025near,
  author={Guo, Longkun and Jia, Chaoqi and Liao, Kewen and Lu, Zhigang and Xue, Minhui},
  journal={IEEE Transactions on Neural Networks and Learning Systems}, 
  title={Near-Optimal Algorithms for Instance-Level Constrained $k$-Center Clustering}, 
  year={2025},
  volume={36},
  number={10},
  pages={18844-18858}}

@misc{wholesale_customers_292,
  author       = {Cardoso, Margarida},
  title        = {{Wholesale customers}},
  year         = {2013},
  howpublished = {UCI Machine Learning Repository},
  note         = {{DOI}: https://doi.org/10.24432/C5030X}
}

@misc{student_performance_320,
  author       = {Cortez, Paulo},
  title        = {{Student Performance}},
  year         = {2008},
  howpublished = {UCI Machine Learning Repository},
  note         = {{DOI}: https://doi.org/10.24432/C5TG7T}
}

@misc{bank_marketing_222,
  author       = {Moro, S. and Rita, P. and Cortez, P.},
  title        = {{Bank Marketing}},
  year         = {2014},
  howpublished = {UCI Machine Learning Repository},
  note         = {{DOI}: https://doi.org/10.24432/C5K306}
}

@misc{default_of_credit_card_clients_350,
  author       = {Yeh, I-Cheng},
  title        = {{Default of Credit Card Clients}},
  year         = {2009},
  howpublished = {UCI Machine Learning Repository},
  note         = {{DOI}: https://doi.org/10.24432/C55S3H}
}

@misc{adult_2,
  author       = {Becker, Barry and Kohavi, Ronny},
  title        = {{Adult}},
  year         = {1996},
  howpublished = {UCI Machine Learning Repository},
  note         = {{DOI}: https://doi.org/10.24432/C5XW20}
}

@inproceedings{liu2015faceattributes,
  title = {Deep Learning Face Attributes in the Wild},
  author = {Liu, Ziwei and Luo, Ping and Wang, Xiaogang and Tang, Xiaoou},
  booktitle = {Proceedings of the 15th International Conference on Computer Vision (ICCV-15)},
  month = {December},
  year = {2015} 
}

@inproceedings{ceccarello2024fast,
  title={Fast and accurate fair k-center clustering in doubling metrics},
  author={Ceccarello, Matteo and Pietracaprina, Andrea and Pucci, Geppino},
  booktitle={Proceedings of the 33rd ACM Web Conference (WWW-24)},
  pages={756--767},
  year={2024}
}

@inproceedings{0012YP24,
  author       = {He Zhang and
                  Xingliang Yuan and
                  Shirui Pan},
  title        = {Unraveling Privacy Risks of Individual Fairness in Graph Neural Networks},
  booktitle    = {Proceedings of the 40th IEEE International Conference on Data Engineering (ICDE-24)},
  pages        = {1712--1725},
  publisher    = {{IEEE}},
  year         = {2024}
}

@article{ZhangWYPTP24,
  author       = {He Zhang and
                  Bang Wu and
                  Xingliang Yuan and
                  Shirui Pan and
                  Hanghang Tong and
                  Jian Pei},
  title        = {Trustworthy Graph Neural Networks: Aspects, Methods, and Trends},
  journal      = {Proc. {IEEE}},
  volume       = {112},
  number       = {2},
  pages        = {97--139},
  year         = {2024}
}

@inproceedings{0012WYY0Y25,
  title={Dynamic Graph Unlearning: A General and Efficient Post-Processing Method via Gradient Transformation},
  author={Zhang, He and Wu, Bang and Yang, Xiangwen and Yuan, Xingliang and Liu, Xiaoning and Yi, Xun},
  booktitle={Proceedings of the ACM on Web Conference 2025},
  pages={931--944},
  year={2025}
}

\end{document}